\DeclareMathOperator*{\argmax}{argmax}
\newtheorem{lem}{Lemma}
\newtheorem{thm}{Theorem}
\newcommand{\N}{\mathbb{N}}
 \def\geaendert#1{\uwave{#1}}
 \def\eingefuegt#1{\uline{#1}}
 \def\gestrichen#1{\xout{#1}}
 \def\geaendert#1{{#1}}
 \def\eingefuegt#1{{#1}}
 \def\gestrichen#1{}
\begin{document}

\mainmatter  

\title{Maintaining Arrays of Contiguous Objects}
\titlerunning{Maintaining Arrays of Contiguous Objects}

\author{Michael A. Bender\inst{1} \and S\'andor P. Fekete\inst{2}
\and Tom Kamphans\inst{2}\thanks{Supported by DFG grant FE 407/8-3, project
``ReCoNodes''} \and Nils Schweer\inst{2}}

\institute{
Department of Computer Science
State University of New York at Stony Brook
Stony Brook, NY 11794-4400, USA
\and
Braunschweig University of Technology,
Department of Computer Science,
Algorithms Group,
38106 Braunschweig, Germany
}

\toctitle{Maintaining Arrays of Physical Objects}

\maketitle

\begin{abstract}
In this paper we consider methods for dynamically storing a
set of different objects (``modules'') in a physical array.
Each module requires
one free {\em contiguous} subinterval in order to be placed.
Items are inserted or removed, resulting in
a fragmented layout that makes it harder to insert further modules.
It is possible to relocate modules, one at a time,
to another free subinterval that is contiguous and
does not overlap with the current location of the module.
These constraints clearly distinguish our problem from classical
memory allocation.
We present a number of algorithmic
results, including a bound of $\Theta(n^2)$ on physical sorting
if there is a sufficiently large free space
and sum up NP-hardness results for arbitrary initial layouts.
For online scenarios in which modules arrive one at a time,
we present a method that requires $O(1)$ moves per insertion
or deletion and amortized cost $O(m_i \lg \hat{m})$ per insertion or deletion,
where $m_i$ is the module's size, $\hat{m}$ is the size of the largest module
and costs for moves are linear in the size of a module.

\end{abstract}

\section{Introduction} \label{sec:introduction}
Maintaining a set of objects is one of the basic problems
in computer science. As even a first-year student knows,
allocating memory and arranging objects (e.g., sorting or
garbage collection)
should not be done by moving the objects,
but merely by rearranging pointers.

The situation changes when the objects to be sorted
or placed cannot be rearranged in a virtual
manner, but require actual physical moves; this is
the case in a densely packed warehouse,
truck or other depots, where items have to be added
or removed. Similarly, allocating
space in a fragmented array is much harder when
one contiguous interval is required for each object:
Even when there is sufficient overall free space,
placing a single item may require rearranging the
other items in order to create sufficient {\em connected}
free space. This scenario occurs for
the application that initiated
our research: Maintaining
modules on a Field Programmable Gate Array
(FPGA);
reconfigurable chips that consist of
a two-dimensional array of processing units.
Each unit can perform one basic operation depending on its
configuration, which can be changed during runtime.
A module is a configuration for a set of processing units
wired together to fulfill a certain task.
As a lot of FPGAs allow only whole columns to be reconfigured, we allow the
modules to occupy only whole columns on the FPGA (and deal with a
one-dimensional problem).
Moreover, because the layout of the modules
(i.e., configurations and interconnections of the processing units)
is fixed, we have to allocate
{\em connected} free space for a module on the FPGA.
In operation,
different modules are loaded onto the FPGA, executed for some time
and are removed when their task is fulfilled, causing {\em fragmentation}
on the FPGA.
When fragmentation becomes too high (i.e., we cannot place modules,
although there is sufficent free space, but no sufficent amount of
connected free space), the execution of new task has to be delayed
until other tasks are finished and the corresponding modules are
removed from the FPGA. To reduce the delay,
we may reduce fragmentation by moving
modules. Moving a module means to stop its operation, copy the module to an
unoccupied space, restart the module in the new place, and declare the
formerly occupied space of the module as free space; see Figure~\ref{move}.
Thus, it is important that the current and the target position of the module
are {\em not overlapping} (i.e., they do not share a column).
This setting gives rise to two approaches: We may either use simple placing
strategies such as first fit and compact the whole FPGA when necessary
(as discussed in \cite{fkstvakt-nbddr-08}),
or use more elaborated strategies that organize the free space and
avoid the need for complete defragmentations.

\gestrichen{
Only very few physical scenarios allow rearrangement
by continuous ``sliding'' moves, which allows simply pushing
together all present objects in order to create one
contiguous free interval.
Instead, items have to be relocated by a sequence of discrete moves,
one at a time,
to a free, contiguous interval;
see Figure~\ref{move}.
}

\begin{figure}[t]
  \centerline{\epsfig{figure=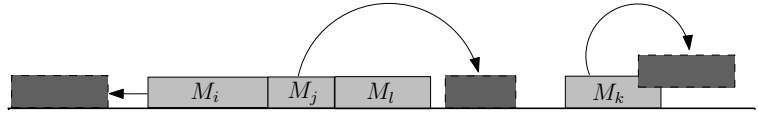,width=100mm}}
\caption{A module corresponds to a set of columns on an FPGA. Each
module occupies a contiguous block of array cells. Module $M_i$ is
shifted and module $M_j$ is flipped. The move of module $M_k$ is
forbidden, because the current and the target position overlap. If these
kind of moves would be allowed connecting the total free space could
always be done by shifting all modules to one side.}\label{move}
\vspace*{-6mm}
\end{figure}

{\bf Related Work.}
There is a large body of work on storage allocation; e.g.,~\cite{k-acpfa-97}
for an overview and \cite{LubyNaorOrda96,NaorEtAl00} for competitive analysis
of some algorithms. Many storage allocation algorithms also have
analogues in bin packing~\cite{CoffmanGareyJohnson96}.  The salient feature of
most traditional memory-allocation and bin-packing heuristics is that
once an item is allocated, it cannot be moved, unlike the model is this paper.
There is also a tremendous amount of work on classical sorting
(see, e.g., \cite{k-acpss-97}).

Physical allocation, where elements can be placed and then
moved, has received less attention.
Itai, Konheim, and Rodeh consider maintaining $n$ unit-size objects
sorted in an $O(n)$ sized array by appropriately maintaining a linear number of
gaps interspersed between the elements at an amortized cost of $O(\lg^{2}n)$ per insert,
and the problem is deamortized in~\cite{Willard92}.
The packed memory array
Bender, Demaine, and Farach-Colton~\cite{bdc-cobt-05} and
Bender and Hu~\cite{BenderHu07} investigate a similar problem in the context of external-memory
and cache-oblivious algorithms.
Bender, Farach-Colton, and Mosteiro~\cite{BenderFaMo06}
show that probabilistically a modified insertion sort runs in $O(n \lg n)$
by leaving appropriate gaps between elements.
In these papers, elements have unit size and there is a fixed order that needs to be
maintained dynamically, unlike the problem in this paper.

A different
problem is described by \cite{gk-pbss-06}, who consider densely
packed physical storage systems for the U.S.\ Navy, based on the classical
15-puzzle, where items can be moved to an adjacent empty cell.
How should one arrange and maintain
the set of free cells, and how can objects be retrieved
as quickly as possible?

Finally, if the sequence of modules (i.e., their size, processing time, and
arrival time) is fully known, then the problem can be stated as a strip
packing problem (without rotation) with release times
for rectangles with widths and heights corresponding to
the module's size and time, respectively.
There is a $(1+\varepsilon)$-approximation for (classical) offline strip packing
\cite{Kenyon96}.
For the case with release times,
Augustine et al.~\cite{Augustine06} give
a $O(\lg n)$ approximation and
a 3-approximation for heights bounded by one.
For approaches from the FPGA community see  \cite{fkstvakt-nbddr-08} and the
references cited in this paper.

{\bf This Paper.}
Dealing with arrangements
of physical objects or data that require contiguous
memory allocation
and nonoverlapping moves
gives rise to a variety of problems
that are quite different from virtual storage management:\\[-18pt]
\begin{itemize}
\item Starting configuration vs.\ full management.
We may be forced to start from an arbitrary configuration,
or be able to control \eingefuegt{the placement of objects.}
\gestrichen{some aspects of the possible
configurations.
This leads to drastically different algorithmic
challenges.}
\item Physical sorting.
Even when we know that it is possible to achieve connected
free space, we may not want to get an arbitrary arrangement
of objects, but may be asked to achieve one in which the objects
are sorted by size.
\item Low-cost insertion.
We may be interested in requiring only a small number
of moves per insertion, either on average, or in the worst case.
\item Objective functions.
Depending on the application scenario, the important
aspects may differ: We may want to minimize the
moves for relocating objects, or the total mass that is moved.
Alternatively, we may perform only very few moves (or none at all),
at the expense of causing waiting time for the objects that
cannot be placed; this can be modeled as minimizing the makespan
of the corresponding schedule.
\end{itemize}

\noindent
{\bf Main Results.}
Our main results are as follows:\\[-18pt]
\begin{itemize}
\item We demonstrate that sorting the modules by size may require
$\Omega(n^2)$ moves.

\item We show that keeping the modules in sorted order
is sufficient to maintain connected free space and to achieve an
optimal makespan, requiring
$O(n)$ moves per insertion or deletion.

\item We give an alternative strategy that guarantees connected
free space; in most steps, this requires $O(1)$ moves for insertion,
but may be forced to switch to sorted order in $O(n^2)$ moves
for high densities.

\item We present an online method that needs $O(1)$ moves per insertion
or deletion.

\item We perform a number of experiments to compare the strategies.

\item For the sake of completeness, we briefly cite 
and  sketch that it is strongly NP-hard to find
an optimal defragmentation sequence when we are forced
to start with an arbitrary initial configuration,
that (unless P is equal to NP) it is impossible
to approximate the maximal achievable free space within
any constant, and
prove that achieving connected space is always possible
for low module density.
\end{itemize}

The rest of this paper is organized as follows. In Section~2,
we introduce the problem and notation. Section~3 discusses aspects
of complexity for a (possibly bad) given starting configuration.
Section~4 focuses on sorting.
Section~5 introduces two insertion strategies that always guarantee
that free space can be made connected. Moreover, we present
strategies that achieve low (amortized or worst-case) cost per
insertion.
Some concluding thoughts are given in Section 6.

\section{Preliminaries} \label{sec:definitions}
Motivated by our FPGA application, we model the problem
as follows:
Let $A$ be an array (e.g., a memory \eingefuegt{or FPGA columns})
that consists of $|A|$ cells. A module $M_i$ of
size $m_i$ occupies a subarray of size $m_i$ in $A$ (i.e., $m_i$ consecutive cells).
We call a subarray of
maximal size where no module is placed a {\em free space}. The
$i$th free space (numbered from left to right)
is denoted by $F_i$ and its size by $f_i$. 

A module located in a subarray, $A_{s}$, can be {\em moved} to another
subarray, $A_{t}$, if $A_t$ is of the same size as $A_{s}$ and all cells
in $A_t$ are empty \geaendert{(particularly, } both subarrays {\bf do not have
a cell in common}). Moves are distinguished into {\em shifts} and {\em flips}:
If there is at least one module located between $A_s$ and $A_t$
we call the move a flip, otherwise a shift; see Fig.~\ref{move}.
Following the two approaches mentioned in the introduction,
we are mainly interested in the following problems.

{\bf Offline Defragmentation:}
We start with a given configuration of modules
in an array $A$ and look for a sequence of moves such that
there is a free space of maximum size. We state the problem
formally:\\
Given: An array $A$, and a set of modules, $M_1,
M_2,...,M_n$, placed in $A$.\\
Task: Move the modules such
that there is a free space of maximum size.

{\bf Online Storage Allocation: } This problem arises from inserting a sequence
of modules, $M_1,M_2,\ldots,M_n$, which arrive in an online fashion, 
the next module
arrives after the previous one has been inserted. After insertion, a
module stays for some period of time in the array before it is removed;
the duration is not known \geaendert{when placing an object}.
\eingefuegt{If an arriving module cannot be placed (because there is
no sufficient connected free space), it has to wait until the array
is compacted or other modules are removed. } The modules in the
array can be moved as described above to create free space for
further insertions.

Our goals are twofold: On the one hand we want to minimize
the {\em makespan} (i.e., the time until the last module
is removed from the array) and, on the other hand, we want to
minimize the costs for the moves. Moves are charged using a
function, $c(m_i)$, which is linear in $m_i$. For example, we can
simply count the number of moves using $c_1(m_i) := 1$, or we count
the moved mass (i.e., we sum up the sizes of the moved modules) with
$c_2(m_i) := m_i$. Formally:\\
Given: An empty array, $A$, a
sequence of modules, $M_1, M_2,...,M_n$, arriving one after the other.\\
Task: Place the modules in $A$ such that (1) the makespan and (2)
the total costs for all moves performed during the insertions is
minimized.

\section{Offline Defragmentation}\label{sec:DefragmentationProblem}
In this section, we assume that we are given an array
that already contains $n$ modules. Our task is to compact the array;
that is, move the
modules such that we end up with one connected free space.
Note that a practical motivation in the context of
dynamic FPGA reconfiguation as well as some heuristics were already
given in our paper~\cite{fkstvakt-nbddr-08}. As they lay the basis
of some of the ideas in the following sections and for the sake of
completeness, we briefly cite and sketch the corresponding
complexity results.

\begin{figure}[b]
\begin{center}
 \input{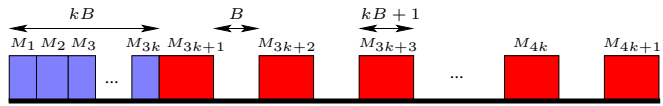}
\end{center}
\vspace*{-6mm}
\caption{\label{fig:npcomplete}Reducing 3-Partition to the MDP.}
\vspace*{-6mm}
\end{figure}

\begin{thm}
Rearranging an array with modules $M_1,\ldots,M_n$ and free spaces
$F_1,\ldots, F_k$ such that there is a free space of maximum size is
strongly NP-complete. Moreover,
there is no
deterministic polynomial-time approximation algorithm within any
polynomial approximation factor (unless P=NP).
\end{thm}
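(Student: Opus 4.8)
The plan is to establish membership in NP and then reduce from 3-Partition, which is strongly NP-complete, so that the reduction also rules out pseudopolynomial algorithms; an amplification of the same construction then yields the inapproximability statement. For membership in NP, a sequence of moves producing a free space of size $W$ is the natural certificate, and one has to argue that a polynomial one always exists: in the spirit of the $\Theta(n^2)$ physical-sorting bound of Section~4, once the modules' final positions and an order in which they reach them are fixed, no module needs to be moved more than a constant number of times, so $O(n^2)$ moves always suffice to realize a reachable configuration; alternatively, where short witnesses are not available, one certifies the target configuration together with the (polynomial-time checkable) reachability condition.

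For the reduction, given $3k$ integers $a_1,\dots,a_{3k}$ with $B/4<a_i<B/2$ and $\sum_i a_i=kB$, I would build the array sketched in Figure~\ref{fig:npcomplete}: a consecutive block of $3k$ ``item'' modules of widths $a_1,\dots,a_{3k}$, together with a small scratch free space for reshuffling; $k+1$ ``wall'' modules, each made so wide that no free space ever arising in the instance could accommodate it, positioned so as to leave exactly $k$ free ``bins'' of width $B$ between consecutive walls; and one distinguished wall with an adjacent free region, placed so that this wall can be shifted exactly once a contiguous free space of a certain critical width $W$ has been opened, the shift then exposing additional free space. We ask whether a free space of size $W$ is reachable.

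The correctness argument has two directions. If the 3-Partition instance is a yes-instance with triples $T_1,\dots,T_k$, peel the items off one end of the block one at a time and route the three members of $T_i$ into bin $i$; each move targets a region disjoint from the item's current cells (the bins lie far to the side) and fits because $T_i$ sums to exactly $B$, so the whole item block --- hence, together with the scratch space, a contiguous free space of width $W$ --- becomes empty. Conversely, because the wide walls never move, the bins stay fixed with total capacity exactly $kB=\sum_i a_i$, so a free space of size $W$ forces every bin to be filled completely, which is exactly a valid partition into $k$ triples. This proves (strong) NP-completeness.

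For the inapproximability, I would replace the single distinguished wall and its free region by a \emph{cascade}: a short regular pattern of wide walls interleaved with equal free blocks, calibrated so that the first cascade wall can be shifted exactly when a width-$W$ free space has been created (i.e.\ iff the 3-Partition instance is solvable), which opens enough room for the next wall, and so on, so that the whole cascade collapses into one large free space. Since the pattern is described in polynomially many bits, the cascade may contain exponentially many stages, so the optimum is exponential in the input size for yes-instances but only polynomially bounded for no-instances; hence a polynomial-factor approximation would decide 3-Partition in polynomial time. The step I expect to be hardest is the calibration: choosing the wall widths, the scratch size, and the cascade widths so that (a) in a no-instance no sequence of moves --- in particular not parking items in the scratch space or partially into bins, and not exploiting the non-overlap rule in some unintended way --- can open a width-$W$ free space or budge a cascade wall, while (b) in a yes-instance the prescribed moves are genuinely executable under the non-overlap rule and really do trigger the full collapse. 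That rule is double-edged here: it is precisely what makes the wide walls behave as rigid separators (needed for the no-direction), yet it also restricts the order in which the yes-direction moves can be carried out, so that direction needs a careful scheduling argument supported by a sufficiently generous scratch region.
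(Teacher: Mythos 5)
Your core reduction is the paper's reduction: the $3k$ item modules of sizes $a_1,\dots,a_{3k}$, $k$ bins of width $B$ separated by wall modules too wide ever to move, and the equivalence ``maximum free space reachable iff the items can be repacked into the bins,'' exactly as sketched around Figure~\ref{fig:npcomplete}. That part is sound (given the usual $B/4<a_i<B/2$ normalization), and your worry about scheduling the yes-direction is harmless, since the bins are disjoint from the item block and items are packed flush.

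The genuine gap is in the inapproximability part. Your cascade uses walls interleaved with \emph{equal} free blocks and gets its super-polynomial gap only by positing ``exponentially many stages,'' justified by the remark that the pattern is ``described in polynomially many bits.'' But an instance of this problem is an array together with an explicit list of modules $M_1,\dots,M_n$; there is no succinct pattern encoding. With exponentially many walls the instance itself is exponentially large, and with polynomially many walls and equal free blocks the yes/no gap is only a polynomial factor, so the claimed ``optimum exponential in the input size'' does not follow. The paper's construction avoids this by using polynomially many chain modules of \emph{increasing} (geometrically growing) sizes, encoded in binary: once the 3-Partition region is cleared, the first chain module can move, which frees room for the next, and so on, so yes-instances admit a free space exponentially larger than the polynomially bounded optimum of no-instances, which defeats every polynomial approximation factor. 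Your cascade is easily repaired along exactly these lines (let the blocks grow geometrically), but as written it does not establish the statement. A second, smaller soft spot: your NP-membership argument rests on the unproved claim that any reachable configuration can be reached with each module moved $O(1)$ times; this does not follow from the sorting bound of Section~4, and note that the paper itself lists the worst-case number of moves and ``cheaper certificates'' for achieving connected free space as open questions, deferring membership to \cite{fkstvakt-nbddr-08}.
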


The proof is based on a reduction of 3-PARTITION, see Figure~\ref{fig:npcomplete}.
The sizes of the first $3k$ modules correspond to the input of a 3-PARTITION instance,
the size of the free spaces, $B$, is the bound from the 3-PARTITION instance.
We can achieve a free space of maximum size, if and only if we can move the first $3k$
modules to the free spaces, which corresponds to a solution for the 3-PARTITION instance.
The inapproximability argument uses a chain of immobile modules of increasing
size that can be moved once a 3-PARTITION has been found, see~\cite{fkstvakt-nbddr-08}.


This hardness depends on
a number of immobile modules, i.e., on relatively small
free space.
If we define for an array $A$ of length $|A|$ the density
to be $\delta= \frac{1}{|A|}\sum_{i=1}^n m_i$, it is not hard to see
that if
\begin{equation}
\delta \leq \frac{1}{2} - \frac{1}{2|A|} \cdot \max_{i=1,\ldots,n}
\{m_i\}\quad\mbox{\rm or} \label{eq:lowden2}
\end{equation}
\begin{equation}
\max_{i=1,\ldots,n} \{m_i\} \leq \max_{j=1,\ldots,k} \{f_j\}\,.
\label{eq:maxModmaxFsPropertey}
\end{equation}
is fulfilled,
the total free space can always be connected with $2n$ steps by
Algorithm~1 \eingefuegt{which shifts all modules to the right in the
first loop and all modules to the left in the second loop. Starting
at the right and left end, respectively.}

\begin{algorithm}[t]\caption{LeftRightShift}
\label{alg:leftrightshift}


\KwIn{An array $A$ with $n$ modules $M_1,\ldots, M_n$ (numbered from
left to right) such that Eq.~(\ref{eq:lowden2}) or
Eq.~(\ref{eq:maxModmaxFsPropertey}) is fulfilled.}

\KwOut{A placement of $M_1,\ldots, M_n$ such that there is only one
free space. }
\For{$i=n$ \KwTo $1$ } {
        Shift the $M_i$ to the right as far as possible.}
\For{$i=1$ \KwTo $n$ } {
        Shift $M_i$ to the left as far as possible.} 
\end{algorithm}

\begin{thm}\label{thm:totfslowden}
Algorithm \ref{alg:leftrightshift} connects the total free space
with at most $2n$ moves and uses $O(n)$ computing time.
\end{thm}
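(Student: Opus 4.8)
The two quantitative claims are the easy part. Each of the two loops moves every module at most once, for a total of at most $2n$ moves. For the running time, keep the current left endpoints of the modules in an array and handle them in the prescribed order: when $M_i$ is processed in the first loop, its maximal rightward shift takes it either to the right wall or flush against $M_{i+1}$, whose final position is already fixed, so the update costs $O(1)$ and the loop costs $O(n)$; the second loop is symmetric. It remains to show that after both loops there is exactly one free space.

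For the analysis, write $g_1,\dots,g_{n-1}$ for the sizes of the free gaps between consecutive modules of the initial layout, and $g_0$, $g_n$ for the free space at the left and right ends, so that $\sum_{i=0}^n g_i=|A|-\sum_i m_i=:T$. Condition~(\ref{eq:lowden2}) is precisely $T\ge\sum_i m_i+\max_i m_i$, and condition~(\ref{eq:maxModmaxFsPropertey}) says that some $g_p\ge\max_i m_i$. The only mechanics needed are: a module shifts into the free space on its right iff that space has size at least $m_i$, in which case it ends flush against its right neighbour (or the right wall); symmetrically for leftward shifts. In particular, loop~1 only shifts modules rightward, so it never changes their relative order.

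The plan is to prove first that after the first loop the leftmost free space $g_0'$ has size at least $\max_i m_i$, and then that this alone forces the second loop to leave a single free space. For the first part I would describe the output of loop~1 as a family of maximal blocks $C_1,\dots,C_r$ of mutually adjacent modules, listed from left to right. For $t<r$, the rightmost module $M_{q_t}$ of $C_t$ cannot have moved (otherwise it would be flush against its right neighbour, contradicting maximality of $C_t$), so the free space $e_t$ separating $C_t$ from $C_{t+1}$ equals the space to the right of $M_{q_t}$ at the moment it was processed; since $M_{q_t}$ did not shift, $e_t<m_{q_t}$. The indices $q_1<q_2<\cdots$ are distinct, and the only other free space that could exceed $\max_i m_i$ is the one to the right of $C_r$, which is either empty or smaller than $m_n$; hence all the free space outside $g_0'$ sums to strictly less than $\sum_i m_i$. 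Under~(\ref{eq:lowden2}) this already gives $g_0'=T-\sum_t e_t>T-\sum_i m_i\ge\max_i m_i$. Under~(\ref{eq:maxModmaxFsPropertey}), pick $g_p\ge\max_i m_i$ with $M_1,\dots,M_p$ lying to its left; a downward induction on $i=p,p-1,\dots,1$ shows that when $M_i$ is processed the free space on its right is at least $g_p\ge\max_i m_i$ (each already-processed $M_{i+1}$ having moved right by at least $g_p$), so every one of $M_1,\dots,M_p$ shifts right, compacting $g_0+\cdots+g_p$ into the leftmost free space and giving $g_0'\ge g_p\ge\max_i m_i$ again.

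For the second part I would maintain, during loop~2, the invariant that just before $M_i$ is processed the modules $M_1,\dots,M_{i-1}$ form one block flush against the left wall and the free space immediately to the left of $M_i$ has size at least $g_0'\ge m_i$; then $M_i$ shifts all the way left onto $M_{i-1}$, the free space reappears immediately to its right with size still at least $g_0'$, and the invariant passes to $M_{i+1}$. After $M_n$ has been processed all modules are flush left and the unique remaining free space, of size $T$, lies at the right end. The main obstacle is the first part: pinning down the block structure of loop~1's output and making the charging argument watertight, together with the boundary cases (a block flush against a wall, one all-encompassing block, or $M_1$ never moving so that $g_0'=g_0$). The recursion $d_i=g_i+d_{i+1}$ for the rightward shift amounts $d_i$ of loop~1 is the bookkeeping device driving both the charging bound and the induction under~(\ref{eq:maxModmaxFsPropertey}).
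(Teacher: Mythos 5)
The paper never spells out a proof of Theorem~\ref{thm:totfslowden}: it only refers to the short correctness argument in \cite{fkstvakt-nbddr-08}, so there is no in-text proof to compare yours against; judged on its own, your plan is sound and essentially complete. You isolate the one mechanical subtlety that makes the claim nontrivial --- because of the no-overlap rule, a module can be shifted only if the adjacent free space has size at least $m_i$, and then it ends flush against its neighbour or the wall --- and the quantitative part ($\le n$ moves per loop, $O(1)$ bookkeeping per module) is immediate. Your two-stage correctness argument checks out: (a) after the right-shifting loop the leftmost free space $g_0'$ is at least $\max_i m_i$, proved under (\ref{eq:lowden2}) by the block decomposition and the charge of each inter-block gap $e_t<m_{q_t}$ to the pairwise distinct non-moving modules $M_{q_t}$, giving $\sum_t e_t<\sum_i m_i$ and hence $g_0'>T-\sum_i m_i\ge\max_i m_i$, and under (\ref{eq:maxModmaxFsPropertey}) by the downward induction showing every module left of the large gap shifts right by at least its size; (b) the loop-2 invariant that the gap to the left of the module currently being processed never drops below $g_0'$, so everything compacts flush left and a single free space of size $T$ remains. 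Two small points to tighten in a write-up, neither a real gap: in the block argument, a rightmost module $M_{q_t}$ that did move could also have ended flush against the right wall, which is ruled out only because $t<r$, so state that case explicitly; and under (\ref{eq:maxModmaxFsPropertey}) all you need (and all that is literally true) is $g_0'\ge g_p$, not that exactly $g_0+\cdots+g_p$ is gathered at the left end, plus the degenerate case where no module lies left of the large gap, where $g_0'\ge g_0\ge\max_i m_i$ holds trivially since loop~1 only moves modules rightward.
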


In the following, we use the idea of Algorithm~1
for maintenance strategies that can accommodate any module for which
there is sufficient total free space.

\section{Sorting} \label{sec:sorting}
In the next section, we present some strategies that are
based on sorting the set of modules by their size. But more than
that, sorting is always an important task. Thus, in this section
we focus on the sorting problem for modules
solely. Note
that we cannot apply classical sorting algorithms such as Quicksort
or Selection Sort, because they assume that every object is of the
same size. We state an algorithm that is similar to
Insertion Sort and show that it can be applied to our setting. It sorts
$n$ modules in an array with $O(n^2)$ steps. Moreover we show that
this is best possible up to a constant factor. More precisely, we
deal with the following problem: Given an array, $A$, with modules
$M_1,\ldots,M_n$ and free spaces $F_1,\ldots,F_k$. Sort the modules
according to their size such that there is only one free space in
$A$.
It is necessary to be able to move every module. Therefore we assume
in this section that Eq.~\eqref{eq:maxModmaxFsPropertey}
is fulfilled in the initial placement. Note that if 
Eq.~\eqref{eq:maxModmaxFsPropertey} is not fulfilled, there are instances for
which it is NP-hard to decide whether it can be sorted or not; this
follows from a similar construction as in 
Section~\ref{sec:DefragmentationProblem}.

\subsection{Sorting {\boldmath $n$} modules with $O(n^2)$ steps}

To sort a given configuration, we first apply
Algorithm~\ref{alg:leftrightshift}, performing $O(n)$ moves.%
\footnote{A short proof of
correctness for this procedure can be found in
\cite{fkstvakt-nbddr-08}.} 
Afterwards, there is only one free space at the right end of $A$ and
all modules are lying side by side in $A$.
We number the modules in the resulting position from left
to right from $1$ to $n$. The algorithm maintains a list $I$ of
unsorted modules. As long as $I$ is not empty, we proceed as
follows: We flip the largest unsorted module, $M_k$, to the right
end of the free space and shift all unsorted modules that were
placed on the right side of $M_k$ to the left. Note that afterwards
there is again only one free space in $A$.


\begin{thm}
Let $A$ be an array with modules $M_1,\ldots,M_n$, free spaces
$F_1,\ldots,F_k$, and let Eq.~(\ref{eq:maxModmaxFsPropertey})
be satisfied. Then Algorithm \ref{alg:sort} sorts the array with
$O(n^2)$ steps.
\end{thm}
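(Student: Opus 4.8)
The plan is to analyze the algorithm sketched just above the statement: first run \texttt{LeftRightShift} to reach a configuration with all $n$ modules packed side by side at the left and a single free space at the right, then repeatedly take the largest still-unsorted module $M_k$, flip it to the right end of the free space, and shift the unsorted modules that lay to its right one step leftward to reclose the gap. I would prove three things: (i) every individual move called for by the algorithm is legal in our model; (ii) after each iteration the invariant ``there is exactly one free space, its left part occupied by the unsorted modules and its right part occupied by an already-sorted suffix, sorted by nondecreasing size from left to right'' is maintained; and (iii) the total number of moves is $O(n^2)$.

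For legality: the initial call to \texttt{LeftRightShift} is justified by Theorem~\ref{thm:totfslowden}, since Eq.~\eqref{eq:maxModmaxFsPropertey} holds by hypothesis, and it costs only $2n = O(n)$ moves. After that, the crucial point is that when we flip the largest unsorted module $M_k$ to the right end of the (unique) free space, the source and target intervals do not overlap: the free space has total size equal to $|A|$ minus the mass of all remaining modules, and because $M_k$ currently sits immediately left of the free space with at least one unsorted module still between it and the sorted suffix — or, in the last steps, because $M_k$ is being moved past nothing and the ``flip'' is really just placing it adjacent — one checks that either the move is a genuine flip (a module lies between source and target) or $M_k$ is already in its final position and no move is needed. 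I would make this precise by observing that the free space always has size $\geq f_j$ for the $f_j$ guaranteed by Eq.~\eqref{eq:maxModmaxFsPropertey}, hence in particular $\geq m_k$, so the target interval fits; and the source interval is disjoint from it by construction since we flip into free cells. The subsequent left-shifts of the unsorted modules are ordinary shifts into adjacent free cells and are trivially legal.

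For the invariant and correctness: initially $I$ contains all $n$ modules and the ``sorted suffix'' is empty, so the invariant holds after \texttt{LeftRightShift}. In one iteration we remove the current maximum of $I$ and place it at the left end of the sorted suffix; since it is $\geq$ every element still in $I$ and the suffix was already sorted with its smallest element at the left, the extended suffix is still sorted, and the shifts restore ``exactly one free space'' with the unsorted block flush left. After $n$ iterations $I$ is empty and $A$ contains the modules sorted by size followed by one free space, as required. For the move count: iteration $\ell$ (with $|I| = n-\ell+1$ at its start) performs one flip plus at most $n-\ell$ shifts, so the total over all iterations is $\sum_{\ell=1}^{n} (1 + (n-\ell)) = n + \binom{n}{2} = O(n^2)$, and adding the $O(n)$ moves of \texttt{LeftRightShift} keeps the bound at $O(n^2)$. (The same bound holds for computing time, since locating the maximum of $I$ and performing the shifts in iteration $\ell$ takes $O(n-\ell+1)$ time.)

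The main obstacle I expect is item (i) — verifying rigorously that the flip of $M_k$ is always a \emph{legal} non-overlapping move, especially in the final iterations when the unsorted block has shrunk and $M_k$ may already be adjacent to, or even inside, its destination. The clean way around this is to maintain as part of the invariant the exact sizes and positions of the free space and the unsorted block, so that the inequality (target interval length) $\leq f$ (current free-space size) and the disjointness of source and target follow by direct bookkeeping; when $M_k$ turns out to already occupy its target cells, the iteration simply does nothing, which is consistent with the move count above. Everything else is routine summation and bookkeeping.
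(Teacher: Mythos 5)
Your proposal follows essentially the same route as the paper: compact with LeftRightShift, then prove by induction the invariant that the unsorted modules lie side by side at the left end, the already-placed modules lie side by side at the right end in increasing order, with a single free space in between, and charge one flip plus $O(n)$ shifts per iteration for $O(n^2)$ moves in total.

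There is, however, one step you dismiss as ``trivially legal'' that is in fact the only place where the non-overlap rule genuinely bites: the left-shifts in the inner loop. After $M_k$ is flipped out, the unsorted block is split by a gap of exactly $m_k$ cells, and each remaining unsorted module $M_i$ with $i>k$ must be shifted left by $m_k$ positions; this move has an empty, non-overlapping target precisely because $m_i \leq m_k$, i.e., because $M_k$ was chosen as the maximum over $I$. If a larger module stood to the right of the gap, ``shift left as far as possible'' would be blocked by the no-overlap requirement, the gap would not close, and your single-free-space invariant would fail. The paper makes exactly this point (``since $M_k$ is chosen to be of maximum size all shifts are well defined''), whereas your write-up invokes maximality only for the sortedness of the suffix; one added sentence fixes this. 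Conversely, the flip legality you labor over is immediate: after compaction the unique free space has size at least $\max_i m_i \geq m_k$ by Eq.~(\ref{eq:maxModmaxFsPropertey}), the target lies entirely inside that free space while the source consists of occupied cells, so source and target can never overlap and no ``null move'' or already-in-place case ever arises. A last cosmetic slip: at termination the single free space sits at the \emph{left} end of $A$ with the sorted modules at the right end, not after the modules as your closing sentence states.
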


\begin{proof}
The while loop is executed at most $n$ times. In every iteration
there is at most one flip and $n$ shifts. This yields an upper bound
of $n^2$ on the total number of moves.

For correctness,
we prove the following invariant: At the end of an iteration of the
while loop, all $M_j$, $j\notin I$, lie side by side at the right
end of $A$ in increasing order (from left to right) and all $M_j$,
$j\in I$, lie side by side at the left end of $A$. We call the first
sequence of modules sorted and the other one non-sorted.

\begin{algorithm}[t]\label{alg:sort}\caption{SortArray}

\SetKw{Kwand}{and}

\KwIn{An array $A$ such that Eq.~(\ref{eq:maxModmaxFsPropertey})
is satisfied.
}

\KwOut{The modules $M_1,\ldots, M_n$ side by side in sorted order
and one free space at the left end of $A$. }

Apply Algorithm ~\ref{alg:leftrightshift}\\
%
$I:=\{1,\ldots,n\}$  \\
\While{ $I \neq \emptyset$ } {
    $k = \argmax_{i\in I} \{m_i\} $\\
    flip $M_k$ to the right end of the free space \nllabel{algline:jump}\\
    $I = I \setminus \{k\}$ \\

    \For{$i=k+1,\ldots,n$ \Kwand $i \in I$}{
            shift $M_i$ to the left as far as possible \nllabel{algline:shift} \\ 
    }
 }
\end{algorithm}


Now, assume that we are at the beginning of the $j$th iteration of
the while loop. Let $k$ be the index of the current maximum in $I$.
By the induction hypothesis and by Eq.~(\ref{eq:maxModmaxFsPropertey}),
the module $M_k$ can be flipped to the only free space. This step
increases the number of sorted elements lying side by side at the
right end of $A$. Since in every step the module of maximum is
chosen, the increasing order in the sequence of sorted modules is
preserved. Furthermore, this step creates a free space of size $m_k$
that divides the sequence of non-sorted modules into two (possible
empty) subsequences. By the numbering of the modules,
the left subsequence contains only indexes smaller than $k$. This
ensures that in the second while loop only modules from the right
subsequence are shifted. Again, since $M_k$ is chosen to be of
maximum size
all shifts are well defined.
At the end of the iteration, the non-sorted modules lie side by side
and so do the sorted ones. \qed
\end{proof}


\subsection{A Lower Bound of $\Omega(n^2)$}

We show that Algorithm \ref{alg:sort} needs the minimum number of
steps (up to a constant factor) to sort $n$ modules. In particular,
we prove that any algorithm needs $\Omega(n^2)$ steps to sort the
following example. The example consists of an even number of
modules, $M_1,\ldots,M_n$, with size $m_i=k$ if $i$ is odd and $m_i
= k+1$ if $i$ is even for a $k\geq 2$. There is only one free space
of size $k+1$ in this initial placement at the left end of $A$, see
Fig.~\ref{fig:lbsort}.

\begin{figure}
\centering
\epsfig{file=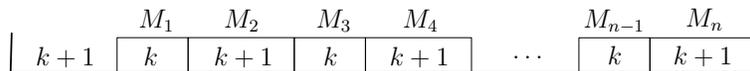, width=10cm}
\caption{\label{fig:lbsort} Sorting an array is in $\Omega(n^2)$.}
\end{figure}

\begin{lem} \label{lem:sort1}
The following holds for any sequence of shifts and flips applied to
the instance shown in Fig.~\ref{fig:lbsort}:\\[-15pt]
\begin{list}{}{}
\item[(i)] There are never two free spaces, each of size
greater than or equal to $k$.
\item[(ii)]There might be more than one
free space but there is always exactly one having either size $k$ or
size $k+1$.
\end{list}
\end{lem}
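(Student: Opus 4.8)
The plan is to exploit the fact that the \emph{total} amount of free space is invariant under moves. The initial configuration has a single free space of size $k+1$, and $|A| = (k+1) + \sum_{i=1}^{n} m_i$, so after any sequence of moves the sizes of all (maximal) free spaces sum to exactly $k+1$. Part~(i) is then immediate: if two distinct free spaces each had size at least $k$, their sizes would sum to at least $2k \geq k+2 > k+1$, using $k \geq 2$ --- a contradiction. The same bound shows that \emph{at most} one free space can have size $k$ or $k+1$; hence for~(ii) it only remains to prove that at least one free space of size $k$ or $k+1$ always exists. Since no free space can exceed the total $k+1$, this is equivalent to showing that there is always a free space of size at least $k$ (and the phrase ``there might be more than one free space'' is trivially witnessed, e.g., whenever a size-$1$ hole coexists with a size-$k$ hole).

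I would prove the remaining claim by induction on the number of moves performed. The base case holds because the initial free space has size $k+1 \geq k$. For the inductive step, assume that before a move there is --- by~(i), the unique --- free space $F^{*}$ of size $k$ or $k+1$, and that module $M_i$ is moved. Every module has size $k$ or $k+1$, so $m_i \geq k$; since a module can only be placed into a block of $m_i$ empty cells lying inside a single maximal free space, and since by the induction hypothesis together with~(i) the space $F^{*}$ is the only free space of size at least $k$, the target of the move must lie inside $F^{*}$. Now consider the configuration after the move: by the non-overlap condition the $m_i$ cells formerly occupied by $M_i$ are empty and are disjoint from the newly placed copy of $M_i$, so they are contained in a maximal free space of size at least $m_i \geq k$. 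Hence a free space of size at least $k$ still exists; combined with the total-mass bound $k+1$ and part~(i), this gives exactly one free space of size $k$ or $k+1$, completing the induction and the proof.

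The proof is essentially driven by the conservation-of-free-space observation; the only point requiring a little care is the bookkeeping in the inductive step, namely that ``$F^{*}$ is the only free space large enough to receive $M_i$'' (here one also notes that a valid move with $m_i = k+1$ forces $|F^{*}| = k+1$, since a size-$k$ free space cannot host such a module, so the move is well defined exactly when $F^{*}$ is big enough) and that ``the hole left behind has size at least $m_i \geq k$.'' I do not expect any genuine obstacle beyond making these two claims precise.
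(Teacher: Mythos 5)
Your proof is correct and follows essentially the same route as the paper: part~(i) and the ``at most one'' half of~(ii) come from the conservation of the total free space ($k+1$), and the ``at least one'' half comes from the observation that the cells vacated by the last moved module (of size $k$ or $k+1$) form a free space of size at least $k$. The paper states this in two terse sentences; your version merely adds the induction scaffolding and the (harmless, though unnecessary) bookkeeping that the move's target must lie in the unique large free space.
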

\begin{proof}
(i) is obvious because otherwise the sum of the sizes of
the free spaces would exceed the total free space. (ii)
follows because in the last step either a module of size $k$ or
$k+1$ was moved leaving a free of size $k$ or $k+1$, resp.
\qed

\end{proof}

\begin{lem}
Let $ALG$ be an algorithm that uses a minimum number of steps to
sort the above instance. Then the following holds:\\[-15pt]
\begin{list}{}{}
\item[(i)] There is never more than one free space in $A$.
\item[(ii)] A module of size $k$ will only be shifted (and never be flipped).
\end{list}
\end{lem}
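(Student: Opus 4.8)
The plan is to first pin down the extremely restricted structure of the free space during any run, and then to rule out, by an exchange argument, that an optimal run ever wastes it. By Lemma~\ref{lem:sort1}, at every moment the multiset of free-space sizes is either $\{k+1\}$ (call this state $O$, as the total free space is $k+1$) or $\{k,1\}$ (state $T$); nothing else is possible. I would begin by recording how a single move acts on these two states. Moving a size-$(k+1)$ module is possible only in state $O$; it exactly fills the gap and leaves a fresh, isolated gap of size $k+1$ at the old footprint, so it keeps state $O$. Moving a size-$k$ module in state $O$ puts it into the gap, leaving one surplus cell adjacent to it and a size-$k$ gap at its old footprint; these two merge into a single size-$(k+1)$ gap exactly when the move is a shift that places the module flush against the \emph{far} end of the gap (state $O$ is preserved), and they stay separated — for every flip of a size-$k$ module, and for a shift placed flush against the near end — giving state $T$. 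Finally, in state $T$ the surplus cell is dead (since $k\ge 2$, no module fits into a single cell), so only size-$k$ modules can move, each filling the size-$k$ gap exactly, and state $O$ is restored precisely when the vacated footprint abuts the dead cell.

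Given this table, part~(ii) follows at once from part~(i): if an optimal run stays in state $O$, its unique gap has size $k+1$, and any flip of a size-$k$ module into it would — by the middle line above — create a size-$k$ gap at the old footprint separated from a surplus cell, i.e.\ move the run to state $T$, a contradiction. Hence size-$k$ modules are only ever shifted.

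For part~(i) I would argue by contradiction with a shortcut argument. Suppose an optimal run ever reaches state $T$, and look at a maximal time interval spent in $T$: the move $\mu$ entering it is a fragmenting move of a size-$k$ module (a flip, or the near-flush kind of shift), and the move $\nu$ leaving it is a size-$k$ move re-merging the dead cell. I would replace $\mu$ together with the moves up to $\nu$ by a run that stays in state $O$ and uses \emph{one fewer} move: in the near-flush-shift case one simply shifts the module flush to the far end instead; in the flip case one performs the corresponding flip, which in state $O$ lands the same module in the same slot but without splitting off a dead cell. In either case the new configuration \emph{dominates} the old one (same left-to-right order of modules, and the single size-$(k+1)$ gap of the new run covers the usable size-$k$ gap together with the dead cell of the old run), so every later move of the original run can be replayed — except that the re-merging move $\nu$ is no longer needed and can be dropped. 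Iterating this removes every visit to state $T$ while shortening the run, contradicting optimality; so (i) holds, and with it (ii).

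The step I expect to be the real obstacle is the flip case of this exchange: one must state the domination invariant precisely and check it is preserved move by move, in particular across a shift that the original run performs flush-to-near (fragmenting) while the replacement performs flush-to-far, so that the active module ends up one cell displaced. The point to nail down is that this displacement is always in the favourable direction, and that the replacement can skip exactly the move $\nu$ that the original spends re-merging its dead cell — only then does the book-keeping genuinely produce a strictly shorter sorting sequence.
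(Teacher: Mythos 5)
Your state analysis (free space is always $\{k+1\}$ or $\{k,1\}$, the surplus cell is dead, only size-$k$ modules can move in the fragmented state, and re-merging happens exactly when a vacated footprint abuts the dead cell) is correct, and your derivation of (ii) from (i) matches the paper's. However, your proof of (i) has a genuine gap, and it sits exactly where you suspect. In the flip case your proposed replacement --- ``perform the corresponding flip, which in state $O$ lands the same module in the same slot but without splitting off a dead cell'' --- does not exist: a flip of a size-$k$ module into the unique size-$(k+1)$ gap \emph{always} fragments, because the vacated size-$k$ footprint is separated from the leftover cell by at least one module (that is what makes it a flip). Likewise the domination invariant you intend to maintain (``the single size-$(k+1)$ gap of the new run covers the usable size-$k$ gap together with the dead cell of the old run'') cannot hold, since in state $T$ those two free regions are non-adjacent and no single contiguous gap covers both; even in the shift case the replacement displaces the moved module by one cell, so later moves cannot simply be ``replayed,'' and the re-merging move $\nu$ cannot just be ``dropped,'' because $\nu$ relocates an actual module whose new position the replacement run must account for.

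The missing idea, which the paper uses to make the exchange entirely local, is the interchangeability of equal-size modules: after a fragmenting move of a size-$k$ module $M_i$, the \emph{very next} move must be some size-$k$ module $M_j$ going exactly into $M_i$'s vacated footprint (by Lemma~\ref{lem:sort1} no other free space can accommodate any module). Since only the arrangement of sizes matters, you may skip the move of $M_i$ altogether and send $M_j$ directly to $M_i$'s target; this produces the same configuration (up to renaming equal-size modules) with one move fewer, contradicting minimality of $ALG$. This two-moves-to-one shortcut avoids any global bookkeeping over a maximal $T$-interval and, in particular, disposes of the flip case that your construction cannot handle.
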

\begin{proof}
Consider a step that created more than one free space. This is
possible only if a module, $M_i$, of size $k$ was moved (i.e., there
is one free space of size $k$). By Lemma \ref{lem:sort1}, all other
free spaces have sizes less than $k$. Thus, only a module, $M_j$, of
size $k$ can be moved in the next step. Since we care only about the
order of the sizes of the modules not about their numbering the same
arrangement can be obtained by moving $M_j$ to the current place of
$M_i$ and omitting the flip of $M_i$ (i.e., the number of steps in
$ALG$ can be decreased); a contradiction.

From (i) we know that there is always one free space of size $k+1$
during the execution of $ALG$. Flipping a small module to this free
space creates at least two free spaces. Hence, a small module will
only be shifted. \qed
\end{proof}

\begin{thm}
Any algorithm that sorts the modules in the example from Fig.~\ref{fig:lbsort}
needs at least $\Omega(n^2)$ steps.
\end{thm}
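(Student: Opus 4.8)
The plan is to reduce to a \emph{shortest} sorting sequence and then to exhibit a potential that changes by at most $1$ per move while having to change by $\Omega(n^2)$ in total. By the two preceding lemmas, a sorting sequence of minimum length keeps exactly one free space throughout and never flips a module of size $k$; since the total free space equals $k+1$, that single free space always has size exactly $k+1$, so a flip of a (large) module always fills the gap exactly and leaves a new gap of size $k+1$ in the vacated place. It therefore suffices to bound the length of such a sequence from below.

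First I would record a configuration combinatorially as the left-to-right order $A_1,\dots,A_n$ of the modules together with the slot $g\in\{0,\dots,n\}$ of the free space (it lies between $A_g$ and $A_{g+1}$). Then a shift changes $g$ by $\pm 1$ and leaves the module order unchanged, while a flip takes a large module on one side of the gap, moves it to the other side past the block of modules in between, and puts the gap into the vacated slot. I would then introduce the potential
\[
  \Psi \;=\; \Phi - \sigma ,
\]
where $\Phi$ is the number of ordered pairs $(L,S)$ of a large module $L$ lying currently to the \emph{left} of a small module $S$, and $\sigma$ is the number of small modules currently to the left of the gap.

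The heart of the proof is the claim that $|\Delta\Psi|\le 1$ for every legal move. A shift of a large module changes neither $\Phi$ nor $\sigma$; a shift of a small module leaves $\Phi$ unchanged and changes $\sigma$ by exactly $1$. For a flip there are two symmetric cases (a rightward flip of a large module lying left of the gap, and a leftward flip of a large module lying right of it); in each case, keeping track of the slot indices shows that the number of large-before-small pairs created or destroyed between the flipped module and the jumped-over block equals exactly the number of small modules in that block, and that these are precisely the small modules that switch sides of the gap — so the changes in $\Phi$ and in $\sigma$ cancel and $\Delta\Psi=0$. This case analysis is the one place where signs and constants must be handled with care, and I expect it to be the main obstacle.

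It remains to evaluate the endpoints. Initially the gap is at the left end, so $\sigma=0$, and the alternating sizes give $\Phi=\sum_{i=1}^{n/2}(n/2-i)=\binom{n/2}{2}$, hence $\Psi=\binom{n/2}{2}=\Theta(n^2)$. In any sorted configuration all small modules precede all large ones (or conversely), so $\Phi\in\{0,(n/2)^2\}$, while $0\le\sigma\le n/2$ wherever the final gap lies; in either case $|\Psi_{\text{final}}-\Psi_{\text{initial}}|\ge\binom{n/2}{2}-n/2=\Omega(n^2)$. Since every move changes $\Psi$ by at most $1$, the sequence has length at least $|\Psi_{\text{final}}-\Psi_{\text{initial}}|=\Omega(n^2)$; and as a shortest sequence already needs this many moves, so does every algorithm that sorts the instance.
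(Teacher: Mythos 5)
Your proposal is correct. You use the paper's two lemmas in the same preparatory way (restrict to a minimum-length sequence, conclude there is always exactly one gap, necessarily of size $k+1$, and that small modules are only shifted, so every move is either a swap of an adjacent module with the gap or a flip of a large module into the gap), but from there your accounting differs from the paper's. The paper argues per small module: since each of the $n/2$ large modules must end up to the left of $M_i$ while the number of smaller modules to its left stays fixed, and each shift displaces $M_i$ by at most $k+1$, module $M_i$ needs at least $\frac{n}{2}-(\frac{i-1}{2}+1)$ shifts; summing over the small modules and adding one mandatory flip per large module gives $\frac{1}{8}n^2+\frac{1}{4}n$. You instead set up a global potential $\Psi=\Phi-\sigma$ (large-before-small inversions minus smalls left of the gap), verify $|\Delta\Psi|\le 1$ per move — with flips and large shifts leaving $\Psi$ unchanged because the smalls jumped over are exactly the smalls that change sides of the gap — and compare endpoints, getting $\binom{n/2}{2}-n/2$ moves. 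Both yield the same $\Theta(n^2)$ bound with leading constant $n^2/8$; your potential argument trades the paper's per-module distance bookkeeping for a single invariant check (and in fact lower-bounds the number of small-module shifts alone), while the paper's count is more elementary and also makes the $n/2$ flips explicit. The only care point, which you handle correctly, is that the clean combinatorial model (gap as a single slot, shifts as adjacent transpositions) is only valid after invoking the minimality lemmas, not for arbitrary sequences.
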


\begin{proof}
Let $ALG$ be an algorithm that needs the minimum number of steps.
W.l.o.g.\ we assume that at the end the large modules are on the
left side of the small ones. We consider the array in its initial
configuration and, in particular, a module, $M_i$, of size $k$.
There are $\frac{i-1}{2}$ small modules, the same number of large
modules and one free space of size $k+1$ to the left of $M_i$.
Because small modules are only shifted in $ALG$ the number of small
modules on the left side of $M_i$ will not change but the number of
large ones will finally increase to $\frac{n}{2}$. Since a shift
moves $M_i$ at most a distance of $k+1$ to the right, $M_i$ has to
be shifted at least once for every large module that is moved to
$M_i$'s left. Taking the free space into account this implies that
$M_i$ has to be shifted at least $\frac{n}{2} - ( \frac{i-1}{2} + 1
)$ times, for any odd $i$ between $1$ and $n$. Hence, for $i=2j-1$
we get a lower bound of
$\sum_{j=1}^\frac{n}{2} \frac{n}{2} - j = \frac{1}{8}n^2 -
\frac{1}{4}n$
on the number of shifts in $ALG$. Additionally, every large module
has to be flipped at least once, because it has a small one to its
left in the initial configuration. This gives a lower bound of $
\frac{1}{8}n^2 - \frac{1}{4}n + \frac{1}{2}n = \frac{1}{8}n^2 +
\frac{1}{4}n $ on the total number of steps in $ALG$ and therefore a
lower bound on the number of steps for any algorithm. \qed
\end{proof}

\section{Strategies for Online Storage Allocation}
\label{sec:onlinestorageallocation} Now, we consider
the online storage allocation problem, i.e., we assume that we have
the opportunity to start with an empty array and are able to control
the placement of modules. We consider strategies that handle the
insertion and deletion of a sequence of modules. 
AlwaysSorted achieves an optimal makespan, 
possibly at the
expense of requiring up to 
$O(n^2)$ moves per insertion; the algorithm
ClassSort that is designed to require very few moves, but at the
cost of larger makespan. 
Additionally, we present a simple local heuristic, LocalShift.



\smallskip
\noindent
{\bf AlwaysSorted.}
This algorithm inserts the modules such that they are sorted
according to their size; that is, the module sizes decrease from
left to right. Note that the sorted order ensures that if a module,
$M_i$, is removed from the array all modules lying on the right side
of $M_i$ (these are at most as large as $M_i$) can be shifted $m_i$
units to the left. Now the algorithm works as follows: Before a
module, $M_j$, is inserted, we shift all modules to the left as far
as possible starting at the left side of the array. Next we search
for the position that $M_j$ should have in the array to keep the
sorted order. We shift all modules lying on the right side of the
position $m_j$ units to the right if possible; after that $M_j$ is
inserted.

\begin{thm} \label{thm:onlineinsertion}
AlwaysSorted achieves the optimal makespan.
The algorithm
performs $O(n)$ moves per insertion in the worst case.
\end{thm}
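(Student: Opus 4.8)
The plan is to establish the two claims separately. For the move‑count bound, I would first argue that the preprocessing step — shifting all modules left as far as possible, starting from the left — costs at most $n-1$ shifts, since each of the at most $n$ already‑placed modules is shifted exactly once in a single left‑to‑right sweep (this is essentially one loop of Algorithm~\ref{alg:leftrightshift}). After this sweep all modules lie side by side at the left end in sorted order, with one free space at the right. Inserting $M_j$ then requires shifting right only those modules whose size is at most $m_j$, i.e.\ the suffix of the sorted list that must make room; this is again at most $n-1$ shifts, each module moved once. Adding the single placement of $M_j$ gives at most $2n-1 = O(n)$ moves per insertion. For deletions, the sorted invariant means every module to the right of the removed $M_i$ has size $\le m_i$, so the single left‑compaction sweep (again $O(n)$ moves) restores a contiguous layout; in fact one can note that deletions need no immediate moves at all if we fold the compaction into the next insertion's preprocessing step.

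The substantive part is the optimality of the makespan. Here I would argue that AlwaysSorted never rejects a module unnecessarily: whenever the total free space in $A$ is at least $m_j$, the algorithm succeeds in placing $M_j$. The key observation is that after the left‑compaction preprocessing, the single free space at the right end has size equal to the total free space, which by hypothesis is $\ge m_j$; moreover, since the modules are kept sorted, Eq.~\eqref{eq:maxModmaxFsPropertey}‑type reasoning guarantees every intermediate shift‑right is well defined (each module being moved right is no larger than the gap opening up in front of it, because we process from the insertion position rightward and $M_j$ is the largest among the suffix it displaces). Hence the set of time steps at which AlwaysSorted is forced to make a module wait is exactly the set of time steps at which the total occupied mass plus $m_j$ exceeds $|A|$ — and at such a step \emph{no} algorithm, regardless of how it arranges things, can place $M_j$, since even with perfect compaction there is simply not enough room. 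Therefore AlwaysSorted's schedule delays a module only when delay is unavoidable, so it finishes the whole sequence — and in particular removes the last module — no later than any other algorithm; this yields the optimal makespan.

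I expect the main obstacle to be making the "never rejects unnecessarily" claim fully rigorous in the presence of the nonoverlapping‑move constraint: one must verify that the left‑compaction sweep and the subsequent right‑shifts can each be realized by legal (nonoverlapping) shifts rather than merely being combinatorially feasible. The sorted invariant is precisely what rescues this — when $M_i$ is deleted, the block to its right is monotone nonincreasing and can be slid left one module at a time into a gap that is always at least as large as the module entering it; symmetrically for the right‑shift during insertion, processing the displaced suffix from right to left keeps a sufficiently large empty gap ahead of each move. I would state this as a short lemma ("the sorted layout can always be left‑compacted, and a new largest‑in‑its‑suffix module can always be inserted, using only shifts") and then the makespan optimality follows from the matching lower bound that any algorithm must wait whenever occupied mass would exceed $|A|$.
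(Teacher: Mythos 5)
Your proposal is correct and follows essentially the same route as the paper: left-compaction yields a single free space equal to the total free space, so a module waits only when no algorithm could place it (hence optimal makespan), and counting the left sweep plus the right shifts gives $O(n)$ moves per insertion. Your added lemma on the legality of the nonoverlapping shifts under the sorted invariant is a detail the paper leaves implicit, but it does not change the argument.
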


\begin{proof}
All modules are shifted to the left as far as possible before the
next module is inserted. After that, there is only one free space
at the right side of $A$. If this free space is at least as large as
the next module, the insertion is performed, meaning that a module
has to wait if and only if the total free space is smaller than the
module size; no algorithm can do better. \qed
\end{proof}

\smallskip
\noindent
{\bf DelayedSort.}
The idea is to reduce the number of moves by delaying the sorting until it is
really necessary:
We maintain a large free space on the left or the right side (alternatingly).
First, we check if we can insert the current module $M_i$, i.e.,
if $m_i \leq \sum f_j$.  Now, if we can insert $M_i$ maintaining
$\max m_i \leq \max f_j$
we insert $M_i$ using First-Fit. Otherwise, we check if $M_i$ can
be inserted---maintaining the above condition---after compacting the array using
by shifting all modules to the side where we currently keep the large
free space, beginning with the module next to the free space.
If maintaining the condition is not possible,
we sort the array using Alg.~\ref{alg:sort} and insert the module into the
single free space left after sorting.
Note that this strategy
also achieves the optimal makespan.
%
%

\smallskip
\noindent
{\bf ClassSort.}
For this strategy we assume that the size of the largest module at
most half the size of the array. We round the size of a module,
$M_i$, to the next larger power of $2$; we denote the rounded size
by $m_i'$.

We organize the array in $a=\lceil \lg \frac{|A|}{2}\rceil$ {\em
classes}, $C_0,C_1,\ldots, C_a$. Class $C_i$ has {\em level} $i$ and
stores modules of rounded size $2^i$. In addition, each class
reserves $0$, $1$, or $2$ (initially 1)
{\em buffers} for further
insertions. A buffer of level $i$ is a free space of size $2^i$. We
store the classes sorted by their level in decreasing order.

The numbers of buffers in the classes provide a sequence,
$S=s_{a},\ldots,s_0$, with $s_i\in\{0,1,2\}$. We consider this sequence as
a redundant binary number; see Brodal~\cite{b-wcepq-96}. Redundant
binary numbers use a third digit to allow additional freedom in the
representation of the counter value. More precisely, the binary
number $d_\ell d_{\ell-1}\ldots d_0$ with $d_i\in\{0,1,2\}$
represents the value $\sum_{i=0}^\ell d_i2^i$. Thus, for example,
$4_{10}$ can be represented as $100_2$, $012_2$, or $020_2$. A
redundant binary number is {\em regular}, if and only if between two
2's there is one 0, and between two 0's there is one 2. The
advantage of regular redundant binary numbers is that we can add or
subtract values of $2^k$ taking care of only $O(1)$ carries, while
usual binary numbers with $\ell$ digits and $11\ldots
1_2+1_2=100\ldots 0_2$ cause $\ell$ carries.

Inserting and deleting modules benefits from this advantage: The
reorganization of the array on insertions and deletions corresponds
to subtracting or adding, respectively, an appropriate value $2^k$
to the regular redundant binary numbers that represents the sequence
$S$. In details: If a module, $M_j$, with $m'_j= 2^i$ arrives, we
store the module in a buffer of the corresponding class $C_i$.%
\footnote{Initially, the array is empty. Thus, we create the classes
$C_1, \ldots,C_i$ if they do not already exist, reserving one free
space of size $2^k$ for every class $C_k$.} If there is no buffer
available in $C_i$, we have a carry in the counter value; that is,
we split one buffer of level $i+1$ to two buffers of level $i$;
corresponding, for example, to a transition of $\ldots 20\ldots$ to
$\ldots 12\ldots$ in the counter. Then, we subtract $2^i$ and get
$\ldots 11\ldots$. Now, the counter may be irregular; thus, we have
to change another digit. The regularity guarantees that we change
only $O(1)$ digits \cite{b-wcepq-96}. Similarly, deleting a module
with $m'_j= 2^i$ corresponds to adding $2^i$ to $S$.

\begin{thm}
ClassSort performs $O(1)$ moves per insertion or deletion in the
worst case. Let
$\hat{m}$ be the size of the largest module in the array, $c$ a
linear function and $c(m_i)$ the cost of moving a module of size
$m_i$. Then the amortized cost for inserting or deleting a module of
size $m_i$ is $O(m_i\lg\hat{m})$.
\end{thm}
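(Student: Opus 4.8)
The plan is to treat the two assertions separately, reducing both to the behaviour of the regular redundant counter $S=s_a\ldots s_0$ that records the buffer counts. For the worst-case bound on the number of moves I would argue as follows. Inserting a module of rounded size $2^i$ subtracts $2^i$ from the value encoded by $S$, and deleting one adds $2^i$; by Brodal's analysis of regular redundant counters~\cite{b-wcepq-96} such an update can be carried out changing only $O(1)$ of the digits $s_j$ while preserving regularity. It then remains to check that realizing a single digit change --- altering $s_j$, the number of buffers of class $C_j$, by $\pm1$ --- costs only $O(1)$ module moves. This is where the layout enters: because the classes occupy contiguous blocks ordered by decreasing level, $C_j$ can grow (resp.\ shrink) by $2^j$ by absorbing (resp.\ releasing) free space precisely at its interface with the adjacent class $C_{j+1}$, so at most a constant number of modules of $C_{j+1}$ have to be moved to bring a buffer of level $j+1$ to that interface (splitting it into two buffers of level $j$, or the reverse merge used on a deletion, then needs no further move), while no other class is displaced. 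Composing the $O(1)$ digit changes yields $O(1)$ moves per insertion or deletion, i.e.\ cost $O(1)$ under $c_1$.

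For the amortized cost under an arbitrary linear $c$, the subtlety is that the $O(1)$ moves triggered by a level-$i$ update may touch modules living in much higher classes --- a carry or borrow reaching level $j$ relocates modules of size $\Theta(2^j)$ --- so the per-operation cost is not $O(m_i)$ in the worst case. I would therefore set up a potential $\Phi=\sum_j \psi(s_j)\,c(2^j)$, with $\psi$ a fixed weighting of the three digit values chosen, as in the standard accounting for regular redundant counters, so that a digit pushed toward the value that forces a carry has by then accumulated enough potential to pay the $\Theta(c(2^j))$ move cost of that carry. Placing a module into an already-present buffer of $C_i$ lowers $s_i$ and raises $\Phi$ by $\Theta(c(2^i))=O(m_i)$ without any move; a carry or borrow that climbs from level $i$ to some level $j$ performs moves costing $O(\sum_{\ell=i}^{j}c(2^\ell))$ but releases matching potential previously deposited, $\Theta(c(2^\ell))$ per level $\ell$, by the earlier operations whose digit changes drove each $s_\ell$ to its extreme. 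Equivalently, one can charge each digit change at level $j$ to the $\Theta(2^{j-i})$ level-$i$ operations that jointly caused it, so each of them is billed $\Theta(2^i)=O(m_i)$ for level $j$; summed over the $O(\lg\hat{m})$ levels $\ell\ge i$ this gives the amortized bound $O(m_i\lg\hat{m})$. The one-time creation of the classes $C_1,\ldots,C_i$ when level $i$ is first used fits the same scheme and is dominated by this bound.

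The step I expect to be the main obstacle is making the amortized argument watertight in the presence of the counter's redundancy: one must exhibit a single potential (or charging rule) that works uniformly for every admissible stream of updates --- arbitrary, interleaved insertions and deletions at mixed levels --- and in particular certify that a digit change high up at level $j$ genuinely cannot recur unless $\Omega(2^{j-i})$ cheaper low-level operations have intervened, so that its $\Theta(c(2^j))$ cost is really chargeable and does not recur too often. The second thing to be careful about is the geometric bookkeeping underlying the worst-case claim --- checking that every split, merge and buffer relocation can always be performed at the relevant class boundary without cascading shifts of unrelated classes --- together with the effect of rounding, $2^i\le 2m_i$; I expect both to come out right once the correct invariants on the class layout are fixed, but that is where the work of the proof sits.
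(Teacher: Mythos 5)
Your proposal follows essentially the same route as the paper: the $O(1)$ worst-case moves come from the $O(1)$ digit changes of the regular redundant counter realized locally at class boundaries, and the amortized bound comes from exactly the charging argument you describe --- a carry that moves a module of size $2^j$ is paid for by the $\Theta(2^j)$ worth of lower-level insertions that regularity forces to intervene before it can recur, each module being charged $O(m_i)$ at most once per class over the $O(\lg\hat m)$ classes. The obstacle you flag (interleaved insertions and deletions at mixed levels, e.g.\ alternating insert/delete of the same size) is handled in the paper by the observation that modules are moved only when the class's free space sits on the wrong side (right for insertion, left for deletion), so buffer splits and merges are otherwise performed only ``imaginarily'' and generate no moves.
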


\begin{proof}
The number of moves is clear. 
Now, observe a class, $C_i$. A module of size $2^i$ is moved,
if the counter of the next smaller class, $C_{i-1}$, switches from
$0$ to $2$ (for the insertion case). Because of the regular structure of
the counter, we have to insert at least modules with a total weight of
$2^{i-1}$ before we have to move a module of size $2^i$ again.
We charge the cost for this move to theses modules.
On the other hand, we charge every module at most once for every class.
As we have $\lg \hat{m})$ classes, the stated bound follows.
The same argument holds for the case of deletion. Note that
we move modules only, if the free space inside a class is not located 
on the right side of the class (for insertion) or on the left side
(for deletion). Thus,
alternatingly inserting and deleting a module of the same size
does not result in a large number
of moves, because we just imaginarily split and merge free spaces.
\qed
\end{proof}

\smallskip
\noindent
{\bf LocalShift.}
We define the distance between two blocks (modules or free spaces)
as the number of blocks that lie between these two blocks. For a
free space $F_i$ we call the set of blocks that are at most at a
distance $k \in \N$ from $F_i$ the $k$-neighborhood of $F_i$. The
algorithm LocalShift works as follows: If possible we use
BestFit to insert the next module $M_j$. Otherwise, we
look at the $k$-neighborhood of any free space (from left to right).
If shifting the modules from the $k$-neighborhood, lying on the left
side of $F_i$, to the left as far as possible (starting a the left
side) and the modules lying on the right side to the right as far as
possible (starting at the right side) would create a free space that
is at least as large as $M_j$ we actually perform these shifts and
insert $M_j$. If no such free space can be created, $M_j$ has to
wait until at least one modules is removed from the array. This
algorithm performs at most $2k$ moves per insertion.

\begin{figure}[t]
\mbox{}\hspace{-15pt}%
{\epsfig{figure=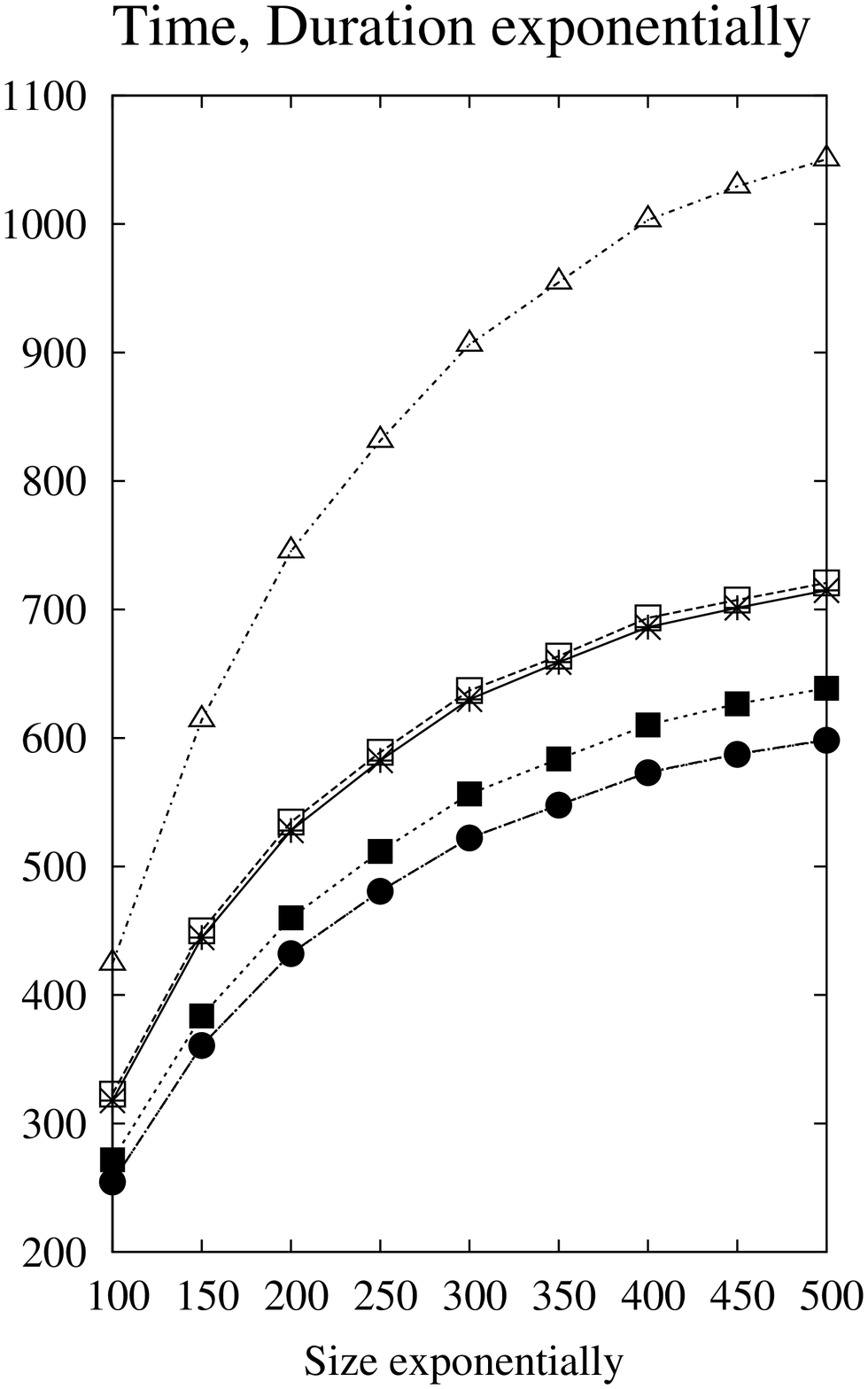,width=45mm}}\hspace{-4mm}%
{\epsfig{figure=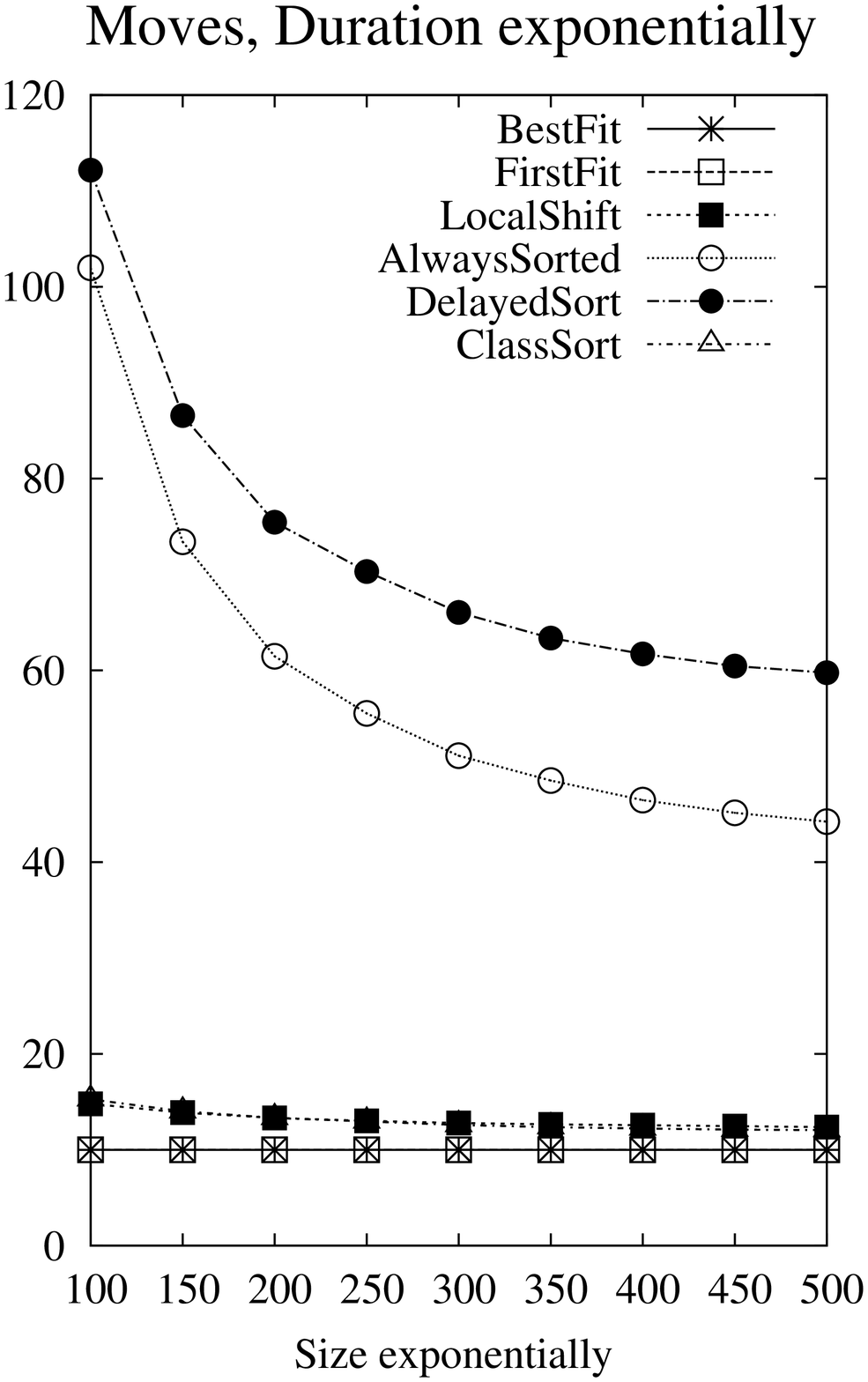,width=45mm}}\hspace{-4mm}%
{\epsfig{figure=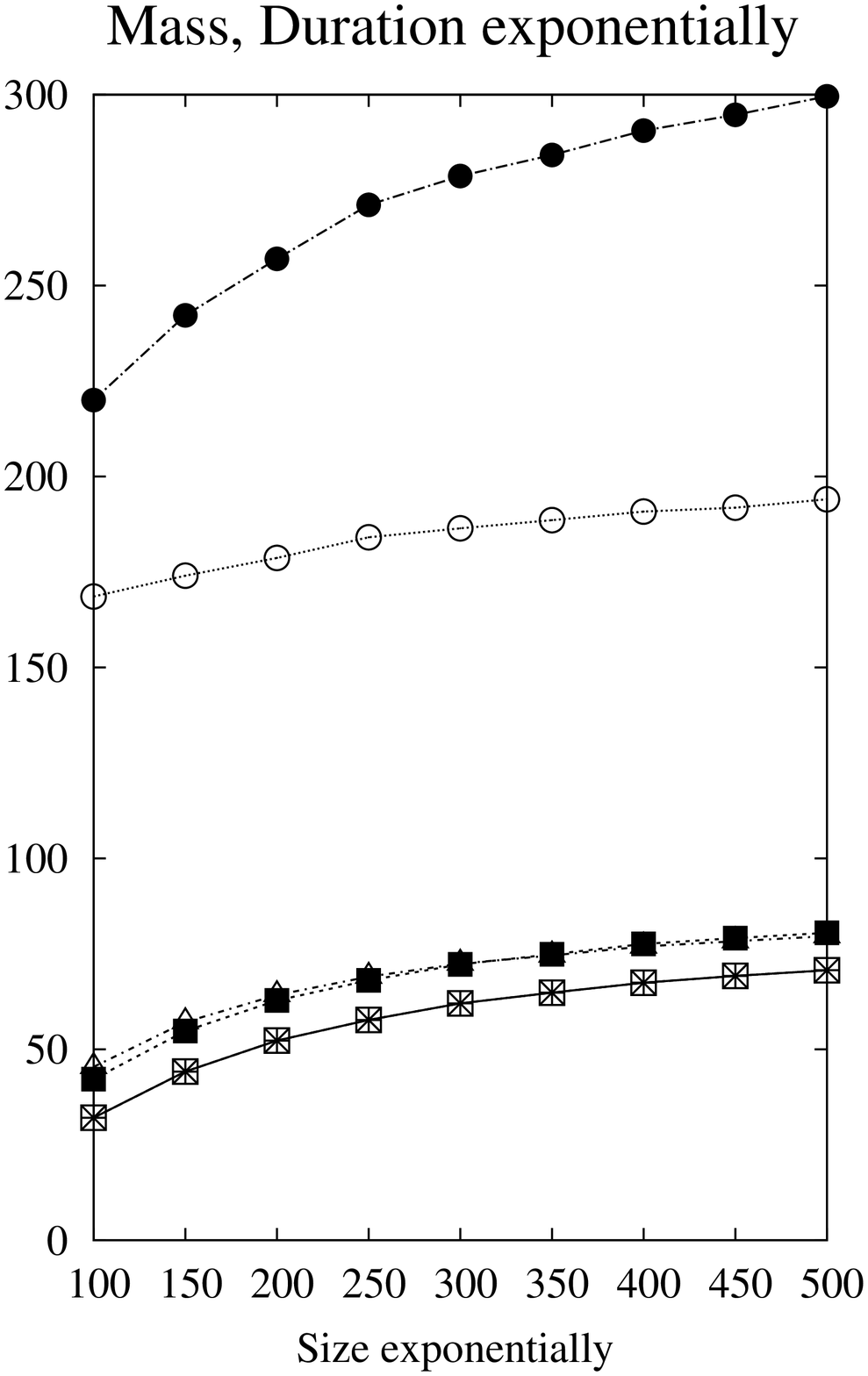,width=45mm}}%
\caption{Experiments with exponential distribution for size and duration.
\label{fig:results1}\vspace{-5mm}}
\end{figure}

\section{Comparison and Conclusion}
To test our strategies, we generated a number of random input sequences 
and analyzed the performance of our strategies 
as well as the simple FirstFit and BestFit approaches
in an array of size $2^{10}$. 
A sequence consists of 100,000 modules, each module has a
randomly chosen size and duration time. 
For each sequence, size and time are shuffled using several
probability distributions.
We analyzed three objectives: the time to complete the whole sequence 
(the makespan),
the number of moved modules ($c(m_i)=1$) and the moved mass ($c(m_i)=m_i$).
Our experiments (see Fig.~\ref{fig:results1} for an example)
showed that LocalShift
performs very well, as it constitutes a compromise
between a moderate number moves and a low makespan.
Both makespan and moves turn out to be nearly optimal.

The more complex strategy ClassSort performed only slightly worse than
LocalShift concerning moves, but disappoints in its resulting makespan.
In contrast, both types of sorting-related strategies
have---of course---a good makespan, but need a lot of moves.
Unsurprisingly, FirstFit and BestFit need the fewest moves (as
they perform moves only on inserting a module, but never move a
previously placed module). Their makespan turned out to be clearly better than
ClassSort, but worse than LocalShift and the sorting strategies.

A comparison of the sorting strategies, AlwaysSorted and
DelayedSort, showed that delaying the sorting of the array until
it is really necessary pays off for the number of moves, but not if we count
the moved mass, this is because the shift from 
maintaining one large free space 
to sorting (caused by not enough free space to accompany the largest
item) results in a sequence with several moves of the heaviest items,
which is not the case for AlwaysSorted.

We have introduced the systematic study of dynamic storage
allocation for contiguous objects. 
There are still
a number of open questions, such as the worst-case number of moves required to
achieve connected free space or cheaper certificates for guaranteeing that connected 
free space can be achieved.


\end{document}